\newtheorem{thm}{Theorem}
\newtheorem{lemma}[thm]{Lemma}
\newtheorem{proposition}[thm]{Proposition}
\newenvironment{proof}{\noindent\bf{Proof.}\rm}{\hfill$\blacksquare$\bigskip}
\begin{document}

\title{Faster FAST
\\ (Feedback Arc Set in Tournaments)}

\author{Uriel Feige~\thanks{Department of Computer Science and Applied Mathematics, the Weizmann Institute, Rehovot, Israel.
{\tt uriel.feige@weizmann.ac.il}. Work supported in part by The
Israel Science Foundation (grant No. 873/08).}}

\maketitle

\begin{abstract}
We present an algorithm that finds a feedback arc set of size $k$
in a tournament in time $n^{O(1)}2^{O(\sqrt{k})}$. This is
asymptotically faster than the running time of previously known
algorithms for this problem.
\end{abstract}

\section{Introduction}

A {\em tournament} is a directed graph in which every pair of
vertices is connected by exactly one arc. A {\em feedback arc set}
is a set of arcs whose removal makes the remaining digraph
acyclic. Given a tournament, the Feedback Arc Set in Tournaments
(FAST) problem asks for the smallest feedback arc set in the
tournament. This problem is NP-hard~\cite{ACN,Alon}. Hence we
shall consider a parameterized version of the problem, $k$-FAST,
in which one is given a tournament and a parameter $k$, and one
has to find a feedback arc set of size $k$ if one exists.
In~\cite{ALS} in was shown (among other things) that this problem
can be solved in time $n^{O(1)} + k^{O(\sqrt{k})}$. (Here and
elsewhere $n$ denotes the number of vertices in the tournament.)
The interesting aspect of this running time is the subexponential
dependence on $k$, as the fact that the problem is fixed parameter
tractable and moreover has a polynomial kernel was established
earlier~\cite{DGHNT}. Given that~\cite{ALS} is titled {\em Fast
FAST}, there is a big temptation to publish a paper titled {\em
Faster FAST}. Not being able to resist this temptation, we present
here a different algorithm that offers a mild improvement to the
running time.

\begin{thm}
\label{thm:main} There is an algorithm that solves $k$-FAST in
time $2^{O(\sqrt{k})} n^{O(1)}$.
\end{thm}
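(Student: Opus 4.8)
\noindent We reconstruct an optimal vertex ordering by dynamic programming, guided by a cheap reference ordering. Fix an optimal ordering $\sigma^*$, with $k^*\le k$ back-arcs (arcs pointing from a later to an earlier vertex), and compute in polynomial time a reference ordering $\pi$ with only $O(k)$ back-arcs; this is possible because FAST admits a polynomial-time constant-factor approximation — for instance, ordering the vertices by in-degree already achieves a constant factor. If $\pi$ and $\sigma^*$ order a pair $\{u,v\}$ oppositely, then the corresponding arc is a back-arc in at least one of the two orderings, so the Kendall-tau distance $\tau$ between $\pi$ and $\sigma^*$ — the number of such pairs — is $O(k)$. Two consequences. (i) For each prefix length $j$, let $A_j$ and $P_j$ be the sets of the first $j$ vertices of $\sigma^*$ and of $\pi$; every vertex of $A_j\setminus P_j$ is discordant with every vertex of $P_j\setminus A_j$, and these two sets have equal size, so $\tau$ is at least the square of that size, which is therefore $O(\sqrt k)$. (ii) Writing $\Delta_v$ for the gap between the positions of $v$ in $\pi$ and in $\sigma^*$, the vertex $v$ is discordant with at least $\Delta_v$ other vertices, so $\sum_v\Delta_v\le 2\tau=O(k)$, and hence at most $O(k/t)$ vertices have $\Delta_v\ge t$.

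\noindent The dynamic program builds $\sigma^*$ left to right; after $j$ steps the state is the symmetric difference $D_j=A_j\triangle P_j$, which together with $j$ recovers the placed set $A_j$, and appending a vertex $v$ costs $|\{u\in A_j:v\to u\}|$ new back-arcs — a quantity computable from $D_j$ and from a table of values depending only on $(v,j)$ that is built once. The point is that only $2^{O(\sqrt k)}$ states are reachable. By (i), $|D_j|=O(\sqrt k)$, but more is true: classify the members of $D_j$ by the scale of their displacement. A vertex with $\Delta_v<\sqrt k$ that lies in $D_j$ must sit within $\sqrt k$ positions of the cut $j$ in $\pi$, so such vertices contribute at most $2^{O(\sqrt k)}$ possibilities altogether; and by (ii), at scale $2^i\sqrt k$ there are at most $O(\sqrt k/2^i)$ eligible vertices, each of which, if it lies in $D_j$, sits within $O(2^i\sqrt k)$ positions of the cut in $\pi$, giving ${O(2^i\sqrt k)\choose \le O(\sqrt k/2^i)}=2^{O((i+1)\sqrt k/2^i)}$ possibilities at scale $i$. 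Multiplying over all scales and using $\sum_i(i+1)/2^i<\infty$ bounds the number of reachable states by $2^{O(\sqrt k)}$. Since every optimal ordering satisfies (i) and (ii) with respect to $\pi$, the prefixes of $\sigma^*$ are among the enumerated states, so the program outputs an ordering with at most $k^*\le k$ back-arcs; each transition costs $n^{O(1)}$, for a total running time of $2^{O(\sqrt k)}n^{O(1)}$.

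\noindent The heart of the matter is the state count. The plain version of this dynamic program only establishes $|D_j|=O(\sqrt k)$ while permitting each element of $D_j$ to range over an $\Omega(k)$-long window of $\pi$, which yields ${\Theta(k)\choose\Theta(\sqrt k)}=k^{\Theta(\sqrt k)}$ states — essentially the running time of~\cite{ALS}. Removing the $\log k$ from the exponent is precisely what the scale-by-scale displacement bound (ii) buys, through the convergence of $\sum(i+1)/2^i$. Beyond this, two routine but necessary checks remain: that the family of reachable states can itself be enumerated within the claimed time bound, and that the prefixes of the unknown optimal ordering genuinely belong to that family, so that the dynamic program is guaranteed to report a feedback arc set of size at most $k$ whenever one exists.
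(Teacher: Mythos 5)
Your argument is correct in its essentials, but it takes a genuinely different route from the paper's. Both proofs are dynamic programs over prefixes of the unknown optimal ordering, with the prefix encoded succinctly relative to a polynomial-time-computable reference, but the reference and the state-counting differ. The paper uses the \emph{indegree} of each vertex as an estimate of its position, with the elementary bound that the total estimation error is at most $2k$; the $O(\sqrt{k})$ vertices whose indegree is far from their true position are then captured inside a single computable set $B$ of ``bad'' vertices, identified by counting how many triangles each arc lies on (Lemma~\ref{lem:badgood} and Proposition~\ref{pro:badsmall} are the technical heart), so that every position has a candidate set of size $O(\sqrt{k})$ and the state count is immediate. You instead take a full reference \emph{ordering} $\pi$ from a constant-factor approximation, bound the Kendall-tau distance to the optimum by $O(k)$, encode the state as the prefix symmetric difference, and control the number of states by the dyadic displacement decomposition and the convergence of $\sum_i (i+1)/2^i$. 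What each buys: the paper's route is self-contained (it needs no approximation algorithm --- note that your passing claim that indegree-sorting is a constant-factor approximation is a genuine theorem of Coppersmith, Fleischer and Rudra, not an observation; alternatively one can use the 3-approximation of~\cite{ACN}), and its state space is a plain power set of a small candidate set; your route dispenses with the triangle-counting machinery entirely, at the price of a more delicate multi-scale count and the need to make the state enumeration explicit. On that last point, your two deferred checks are indeed routine provided you impose on each candidate state $(j,D)$ the explicit, checkable constraints $|D|=O(\sqrt{k})$ and $\sum_{v\in D}|\mathrm{pos}_\pi(v)-j|=O(k)$: the true prefixes satisfy the latter because a vertex in $D_j$ has displacement at least its distance to the cut, your scale-by-scale computation bounds exactly the number of sets satisfying these constraints, and forward exploration from the empty state, discarding violating states, enumerates them within the claimed time.
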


Observe that equivalently, this running time can be written as
$n^{O(1)} + 2^{O(\sqrt{k})}$ (this only changes the constants in
the $O$ notation). The running time of the algorithm of~\cite{ALS}
remains polynomial in $n$ for $k = O((\log n/ \log \log n)^2)$,
whereas the running time of our algorithm remains polynomial for
$k = O((\log n)^2)$.

The algorithm presented in~\cite{ALS} is based on the color coding
technique (introduced in~\cite{AYZ}), and specifically on a
certain random coloring lemma: for every graph with $k$ edges, if
one colors its vertices at random by $O(\sqrt{k})$ colors, then
with probability at least $2^{-O(\sqrt{k})}$ the coloring is
proper (see~\cite{ALS} for an exact statement of this lemma).
In~\cite{ALS} this lemma is used in combination with dynamic
programming to design an algorithm for $k$-FAST. Moreover, this
lemma may be of interest beyond the specific application to the
$k$-FAST problem. The algorithm presented in the current paper is
also based on dynamic programming. However, it does not use the
random coloring lemma.

\section{The algorithm}

The feedback arc set problem is equivalent to finding a linear
ordering of the vertices (numbering them from~0 to $n-1$) that
minimizes the number of arcs pointing backwards. Had the
tournament been acyclic, there would have been a simple local test
that would tell us where to place a vertex $v$ in this linear
ordering. We call it the {\em indegree test}. Under this test, the
proposed location for vertex $v$ is $i$ if and only if $v$ has $i$
incoming arcs (and $n - i - 1$ outgoing arcs).

What happens if the minimum feedback arc set is of size $k > 0$?
In this case, the degree test might be incorrect. Let $\pi$ be an
optimal linear ordering (one with only $k$ backward arcs). Let the
error of the indegree test for vertex $v$ be the absolute value of
the difference between its location under $\pi$ and the number of
incoming arcs that $v$ has. Then the sum of errors over all
vertices is at most $2k$ (since each feedback arc contributes at
most~2 to the error). It follows that for every value of $d$ (we
shall later choose $d = \Theta(\sqrt{k})$) there are at most
$2k/d$ vertices for which the error is more than $d$. Let
$D_{\pi}$ denote the set of all vertices of error more than $d$
with respect to the optimal linear ordering $\pi$. As we shall see
shortly, given $D_{\pi}$, a minimum feedback arc set can be
computed in time $n^{O(1)}2^{O(|D_{\pi}| + d)}$. For $d =
\Theta(\sqrt{k})$ the running time becomes
$n^{O(1)}2^{O(\sqrt{k})}$ as desired. The difficulty is that
$D_{\pi}$ is not given to us, and handling this issue is the
purpose of the following discussion.

We say that three vertices in the tournament form a {\em triangle}
if their corresponding arcs form a directed cycle. At least one of
the arcs in a triangle is a feedback arc. It is known that a
tournament is acyclic if and only if it does not contain
triangles. Call an arc {\em suspect} if it belongs to some
triangle. Call an arc a {\em major suspect} if it belongs to at
least $t$ triangles, where $t = \Theta(\sqrt{k})$ is a parameter
to be chosen later. Call a vertex {\em bad} if at least $t$ arcs
incident with it are major suspects. Let $B$ denote the set of all
bad vertices in a tournament. Clearly, given the tournament, the
set $B$ can be computed in polynomial time. We shall now show that
for appropriate choices of $d$ and $t$, $D_{\pi} \subset B$, and
moreover, that like $D_{\pi}$, the size of $B$ is $O(\sqrt{k})$.

\begin{lemma}
\label{lem:badgood} For $D_{\pi}$ and $B$ as defined above, if $d
\ge 4t$ and $t \ge \sqrt{k}$, then $D_{\pi} \subset B$.
\end{lemma}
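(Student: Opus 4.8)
The plan is to fix an optimal linear ordering $\pi$, take a vertex $v \in D_\pi$, and exhibit at least $t$ major-suspect arcs incident to $v$, which is exactly what it means for $v$ to be bad. By reversing all arcs of the tournament and simultaneously reversing $\pi$ --- an operation that preserves optimality of the ordering, preserves every triangle (hence the notions of major suspect and bad), and preserves the error of each vertex while interchanging indegree and outdegree --- we may assume that $v$ has \emph{too many} incoming arcs, i.e.\ $\mathrm{indeg}(v) - \pi(v) = e$ with $e > d$. Write $p = \pi(v)$ and let $A$ be the set of vertices $u$ with $u \to v$ and $\pi(u) > p$. Since at most $p$ of the $\mathrm{indeg}(v) = p+e$ in-neighbours of $v$ can occupy the $p$ positions preceding $v$, we have $a := |A| \ge e > d \ge 4t$.

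First I would record a consequence of optimality: for every $j \ge 0$, among the $j$ vertices in positions $p+1,\dots,p+j$ at most $j/2$ have an arc into $v$. Moving $v$ to just past position $p+j$ leaves the relative order of all other vertices unchanged, converts each such in-arc into a forward arc and each of the remaining $j$ arcs into a backward arc, for a net change of $j - 2g_j$, where $g_j$ is the number of those in-arcs; optimality forces $j - 2g_j \ge 0$. Listing $A = \{u_1,\dots,u_a\}$ in increasing order of position and applying this with $j = \pi(u_i)-p$ (the first $i$ elements of $A$ all lie in positions $(p,\pi(u_i)]$) gives $i \le g_j \le j/2$, i.e.\ $\pi(u_i) \ge p + 2i$.

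Next I would lower-bound the number of triangles through each arc $u_i \to v$. Any vertex $w$ with $p < \pi(w) < \pi(u_i)$ yields the triangle $u_i \to v \to w \to u_i$ unless $w \to v$ or $u_i \to w$; the former occurs exactly for the $i-1$ members of $A$ that precede $u_i$, and the latter for at most $b_i$ vertices, where $b_i$ is the number of backward arcs of $\pi$ with tail $u_i$. Since there are $\pi(u_i)-p-1 \ge 2i-1$ candidates $w$, the arc $u_i \to v$ lies in at least $(2i-1)-(i-1)-b_i = i - b_i$ triangles. Call $i$ \emph{good} if $i - b_i \ge t$, so that $u_i \to v$ is then a major suspect. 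The indices $i \le 2t$ account for at most $2t$ non-good indices; for a non-good index $i > 2t$ we have $b_i > i - t > t$, and since $\sum_i b_i \le k$ (each backward arc has a unique tail) and $t \ge \sqrt k$, there are fewer than $k/t \le t$ such indices. Hence at least $a - 3t > t$ indices are good, producing more than $t$ major-suspect arcs at $v$, so $v \in B$.

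The only step that needs an idea rather than bookkeeping is the use of optimality in the second paragraph: a naive triangle count breaks down precisely when the in-neighbours of $v$ are bunched immediately after it, and the inequality $\pi(u_i) \ge p + 2i$ is exactly what forbids this. The rest is arranging the constants so that $4t$ beats $3t$ with slack, together with checking the minor edge cases arising when $t$ or $d$ is not an integer.
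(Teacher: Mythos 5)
Your proof is correct and follows essentially the same route as the paper's: the same exchange argument (moving $v$ past position $p+j$ is exactly the paper's cyclic shift) yields the key inequality $\pi(u_i)\ge p+2i$, the same triangles $u_i\to v\to w\to u_i$ are counted, and the same budget of $k$ backward arcs bounds the number of arcs $u_i\to v$ that fail to be major suspects. Your write-up is somewhat more careful than the paper's --- you handle the symmetric case explicitly via arc reversal and count good indices over all of $A$ rather than only over $j\in[2t,4t]$ --- but the underlying argument is identical.
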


\begin{proof}
We need to show that every vertex with error $d' > d$ is incident
with at least $t$ arcs that are major suspects. Consider such a
vertex $v$, let $i$ be its location in $\pi$ and let $i + d'$ be
the number of $v$'s incoming arcs. (The case in which $i - d'$ is
the number of incoming arcs is handled in a similar way and is
omitted.) Let $F$ be the set of vertices that come after $v$ in
$\pi$ and yet have an arc directed from them to $v$. Clearly, $|F|
\ge d'$. Let us denote the vertices in $F$ by $v_1, v_2, \dots$ in
the order of their appearance after $v$. A crucial observation is
that for every $j$, the location of $v_j$ in $\pi$ has to be no
sooner than $i + 2j$. Otherwise, $\pi$ is not an optimal linear
arrangement, because the size of the feedback arc set can be
decreased by doing one cyclic shift on the block of vertices that
starts at $v_i$ and ends at $v_j$ (where all vertices in the block
move one location down except $v_i$ that moves to the original
location of $v_j$). Doing this cyclic shift, $j$ arcs are removed
from the feedback arc set and less than $j$ arcs join the feedback
arc set (only arcs incident with $v_i$ are affected by the cyclic
shift).

Now consider the arc $(v_j,v_i)$, which is a feedback arc in
$\pi$. Let us consider only triples of vertices $(v_i,v_j,u)$
where vertex $u$ has to lie in $\pi$ between $v_i$ and $v_j$, and
moreover, the arc $(v_i,u)$ is directed towards $u$ (in agreement
with the linear order $\pi$). By the observation above, there are
at least $j$ possibilities for the choice of $u$. The triple
$(v_i,v_j,u)$ forms a triangle unless the arc $(v_j,u)$ is
directed towards $u$, which makes it too a feedback arc in $\pi$.

Now we use the fact that the feedback arc set of $\pi$ has size
$k$. Consider only values of $j$ between $2t$ and $4t$ (here we
used the assumptions the $d' \ge d \ge 4t$). For each such vertex
$v_j$, the arc $(v_j,v_i)$ is not a major suspect only if at least
$t$ arcs $(v_j,u)$ are feedback arcs in $\pi$. Hence for $v_i$ not
to be incident with $t$ major suspects, there must be more than
$t^2$ feedback arcs in $\pi$, which is a contradiction for $t \ge
\sqrt{k}$.
\end{proof}

\begin{proposition}
\label{pro:badsmall} For $B$ as defined above and $t = \sqrt{6k}$,
$|B| < t$.
\end{proposition}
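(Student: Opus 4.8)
The plan is to show that every bad vertex is incident with a large number of feedback arcs of a fixed optimal linear ordering $\pi$, and then to read off the bound from the fact that the $k$ feedback arcs of $\pi$ have only $2k$ endpoints in total. For a vertex $w$ let $k_w$ denote the number of feedback arcs of $\pi$ incident with $w$, so that $\sum_w k_w = 2k$. The heart of the matter is the following sub-claim: if an arc $a = \{u,v\}$ is \emph{not} a feedback arc of $\pi$, then $a$ belongs to at most $k_u + k_v$ triangles. Indeed, every triangle on a vertex set $\{u,v,w\}$ contains a feedback arc, and that arc is not $a$, so it is $\{u,w\}$ or $\{v,w\}$; since distinct choices of the third vertex $w$ produce distinct such arcs, and no arc of the form $\{u,w\}$ (with $w \neq v$) equals an arc of the form $\{v,w'\}$ (with $w' \neq u$), this assigns the triangles containing $a$ injectively to the feedback arcs incident with $u$ or with $v$, of which there are at most $k_u + k_v$.

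Next I would fix a bad vertex $v$ and conclude $k_v \geq (\sqrt{6}-\sqrt{2})\sqrt{k}$. If $k_v \geq t$ this is immediate, so assume $k_v < t$. Vertex $v$ is incident with at least $t$ major suspects, and at most $k_v$ of them are feedback arcs (a feedback arc that is a major suspect and is incident with $v$ is just a feedback arc incident with $v$), so at least $t - k_v$ of them are non-feedback arcs $\{v,x_1\}, \{v,x_2\}, \dots$, with the $x_i$ pairwise distinct. Each $\{v,x_i\}$ is a major suspect, hence lies in at least $t$ triangles, while the sub-claim bounds this by $k_v + k_{x_i}$; therefore $k_{x_i} \geq t - k_v$ for each $i$. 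Thus there are at least $t - k_v$ distinct vertices each incident with at least $t - k_v$ feedback arcs, which forces $(t-k_v)^2 \leq \sum_w k_w = 2k$, i.e. $k_v \geq t - \sqrt{2k}$. With $t = \sqrt{6k}$ this gives $k_v \geq (\sqrt{6}-\sqrt{2})\sqrt{k}$.

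Summing this inequality over all bad vertices would then finish the argument: $|B|\,(\sqrt{6}-\sqrt{2})\sqrt{k} \leq \sum_{v \in B} k_v \leq \sum_w k_w = 2k$, so $|B| \leq \frac{2\sqrt{k}}{\sqrt{6}-\sqrt{2}} = \frac{(\sqrt{6}+\sqrt{2})\sqrt{k}}{2} < \sqrt{6}\,\sqrt{k} = t$, where the last inequality is just $\sqrt{2} < \sqrt{6}$.

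The step I expect to require the most care is the sub-claim: checking that the map sending a triangle containing $a$ to one of its feedback arcs is genuinely injective, and — in the application — being careful to discard the at most $k_v$ feedback major suspects incident with $v$ before invoking the sub-claim on the remaining ones. After that the argument is just bookkeeping, and the specific constant $6$ in $t = \sqrt{6k}$ is precisely what is needed to make the final inequality strict.
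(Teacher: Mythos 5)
Your proof is correct, but it takes a genuinely different route from the paper's. The paper runs one global double count: it charges each of the $t^2|B|$ (bad vertex, major suspect, triangle) incidences to a feedback arc of that triangle, and then bounds by $4t+2|B|$ the number of times any single feedback arc can be charged, yielding $t^2|B| \le k(4t+2|B|)$. You instead isolate a local structural lemma --- a non-feedback arc $\{u,v\}$ lies in at most $k_u+k_v$ triangles, where $k_w$ is the number of feedback arcs of $\pi$ at $w$ --- and use it to show that every bad vertex is itself incident with at least $(\sqrt{6}-\sqrt{2})\sqrt{k}$ feedback arcs, after which $\sum_w k_w = 2k$ finishes. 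I checked the two delicate points you flagged: the injection from triangles containing $\{u,v\}$ into feedback arcs at $u$ or at $v$ is valid (it uses $w\ne v$ and $u\ne v$, and the two families of feedback arcs are disjoint precisely because $\{u,v\}$ itself is not a feedback arc), and discarding the at most $k_v$ feedback major suspects at $v$ before invoking the lemma on the remaining $t-k_v$ of them is handled correctly. Your route buys a reusable per-vertex statement (every bad vertex carries $\Omega(\sqrt{k})$ feedback arcs), a slightly stronger numerical conclusion ($|B|\le \frac{\sqrt{6}+\sqrt{2}}{2}\sqrt{k}\approx 1.94\sqrt{k}$, whereas the paper's inequality as written gives only $|B|\le t=\sqrt{6k}$ and needs a little extra slack to become strict), and a clean separation of the triangle combinatorics from the final arithmetic; the paper's version is shorter and avoids introducing the quantities $k_w$, at the cost of the somewhat fiddly multiplicity bound $4t+2|B|$.
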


\begin{proof}
Each vertex of $B$ is incident with at least $t$ arcs which are
major suspects (we shall think of this as being exactly $t$, by
possibly ignoring some of the major suspects), and each such arc
is in at least $t$ triangles (again, we shall think of this as
being exactly $t$, by possibly ignoring some of the triangles).
This gives a count of $t^2|B|$ triangles (though possibly the same
triangle might be counted more than once). Each such triangle has
at least one feedback arc, and hence we have counted $t^2|B|$
feedback arcs. The problem is that the same arc might have been
counted several times. We bound now the number of times that a
single arc $(u,v)$ might be counted.

The arc $(u,v)$ may serve as a major suspect twice, once for $u$
and once for $v$. This gives $2t$ triangles in which $(u,v)$ might
have been counted. Also for every other major suspect incident
with $u$ (or with $v$), the arc $(u,v)$ might appear in one of its
triangles. This gives another $2t$ triangles in which $(u,v)$
might appear. Finally, for every vertex $w \not\in \{u,v\}$ in
$B$, if either $(w,u)$ or $(w,v)$ is a major suspect, then the arc
$(u,v)$ might again be counted in a triangle. This gives at most
$2|B|$ additional triangles. Altogether an arc might be counted at
most $4t + 2|B|$ times.

As the total number of feedback arcs is $k$, we obtain the
inequality $t^2|B|/(4t + 2|B|) \le k$, which implies the
proposition.
\end{proof}

We can now describe our algorithm. We assume without loss of
generality that the value of $k$ is known (the algorithm may try
all values of $k$ in increasing order until the first one that
succeeds). Given a value of $k$, let us fix $t = 3\sqrt{k}$ and $d
= 12\sqrt{k}$. The main steps of the algorithm are as
follows.

\begin{enumerate}

\item Compute the set $B$ of bad vertices.

\item For each location $i$ in the linear order, compute a
candidate set $C(i)$ that contains those vertices whose indegree
is between $i - d$ and $i + d$, plus the vertices of $B$. In
addition compute a prefix set $P(i)$ that contains those vertices
not in $B$ with indegree less than $i - d$.

\item Using these candidate sets and prefix sets, compute a minimum
feedback arc set using dynamic programming.

\end{enumerate}

We now elaborate on these main steps, proving the correctness of
the algorithm and bounding its running time.

Step (1) can be done in time $O(n^3)$ by checking for each triple
of vertices whether it forms a triangle, then identifying those
arcs that are major suspects (members of at least $t$ triangles),
and putting in $B$ those vertices that are incident with at least
$t$ major suspects. By Proposition~\ref{pro:badsmall} we have that
$|B| \le t$ and by Lemma~\ref{lem:badgood} we have that $D_{\pi}
\subset B$.

Given $B$, Step (2) can also be performed in polynomial time,
since the indegree of vertices can be computed in polynomial time.
The properties required from Step~(2) are summarized in the
following proposition.

\begin{proposition}
\label{pro:sets} In the optimal linear arrangement $\pi$, for
every location $i$, the vertex in location $i$ is one of the
vertices of the candidate set $C(i)$ (as computed in Step~(2) of
the algorithm). Moreover, every vertex of the prefix set $P(i)$ is
placed  in $\pi$ prior to location $i$.
\end{proposition}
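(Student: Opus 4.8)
The plan is to prove the two claims of Proposition~\ref{pro:sets} separately, both by appealing to the error bound built into the definitions of $C(i)$ and $P(i)$, together with Lemma~\ref{lem:badgood}.

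First I would handle the candidate sets. Fix a location $i$ and let $v$ be the vertex placed at location $i$ by $\pi$. If $v \in B$ then $v \in C(i)$ by construction, so we may assume $v \notin B$. By Lemma~\ref{lem:badgood} (applicable since with $t = 3\sqrt{k}$ and $d = 12\sqrt{k}$ we have $t \ge \sqrt{k}$ and $d = 4t$), $v \notin D_\pi$, i.e.\ the error of the indegree test at $v$ is at most $d$. By the definition of the error, the indegree of $v$ differs from its location $i$ under $\pi$ by at most $d$, so the indegree of $v$ lies between $i-d$ and $i+d$; hence $v \in C(i)$.

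Next I would handle the prefix sets. Let $w \in P(i)$; then $w \notin B$ and the indegree of $w$ is less than $i - d$. Again by Lemma~\ref{lem:badgood}, $w \notin D_\pi$, so the location of $w$ under $\pi$ differs from its indegree by at most $d$; since the indegree is less than $i-d$, the location of $w$ is less than $i$, i.e.\ $w$ appears before location $i$ in $\pi$. This is exactly the claimed property.

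I do not expect any serious obstacle here; the statement is essentially a repackaging of Lemma~\ref{lem:badgood} combined with the arithmetic relating indegree, location, and error. The only point deserving care is checking that the chosen constants $t = 3\sqrt{k}$, $d = 12\sqrt{k}$ actually satisfy the hypotheses $t \ge \sqrt{k}$ and $d \ge 4t$ of Lemma~\ref{lem:badgood} (they do, with $d = 4t$), and being careful about the two symmetric cases in the definition of error (indegree above versus below the location), both of which are subsumed by the single inequality ``$|\mathrm{indegree} - \mathrm{location}| \le d$'' for vertices outside $D_\pi$.
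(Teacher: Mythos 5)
Your proof is correct and follows essentially the same route as the paper: both arguments reduce to Lemma~\ref{lem:badgood} (you use it in the contrapositive form $v \notin B \Rightarrow v \notin D_\pi$, the paper splits on $v \in D_\pi$ directly, which is the same thing) plus the arithmetic relating error, indegree, and location. Your writeup is in fact slightly more careful than the paper's, which leaves the prefix-set half as ``a similar argument'' and contains a small typo ($v \in B$ where $v \in C(i)$ is meant).
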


\begin{proof}
Let $v$ be the vertex placed by $\pi$ in location $i$. We need to
show that $v \in C(i)$. If $v \in D_{\pi}$ then this follows from
Lemma~\ref{lem:badgood} because $D_{\pi} \subset B$. Hence it
remains to consider the case that $v \not\in D_{\pi}$. In this
case the indegree of $v$ is in the range $[i - d, i+d]$, again
implying that $v \in B$, as desired. A similar argument shows that
all vertices of the prefix set $P(i)$ are placed in $\pi$ before
location $i$.
\end{proof}

Now we can use dynamic programming to find which linear order
among those that respect the candidate sets has the smallest
feedback arc set. We scan the locations from~0 to~$n-1$. On
reaching location $i$ we need only know two things:

\begin{enumerate}

\item Which vertices of $C(i)$ have been placed up to location $i$.
There are at most $2^{|C(i)|}$ possibilities for such subsets.

\item How many backward arcs we have placed so far. For each
choice of subset $C'(i)$ as in item (1), we need to remember just
the smallest number of backward arcs that can attained in a linear
arrangement that up to $i$ placed $C'(i)$ and did not place $C(i)
- C'(i)$.

\end{enumerate}

At step $i$, one can place at location $i$ any one of the vertices
$v$ of $C(i) - C'(i)$. (If $C(i) - C'(i)$ is empty, the
corresponding branch of the dynamic programming dies off.)
Thereafter, $C'(i+1)$ can be computed in a straightforward way as
$C(i+1) \cap (C'(i) \cup \{v\})$. Likewise, the number of backward
arcs can be updated by adding to the previous total those arcs
going from $v$ to $C'(i)$ and from $v$ to $P(i)$.

Let $C = \max_i[|C_i|]$. Then the size of the dynamic programming
table constructed by this dynamic programming algorithm is at most
$n2^{|C|}$, and the running time of the algorithm is polynomial in
the size of the table. Hence to prove Theorem~\ref{thm:main} it
remains to prove the following proposition.

\begin{proposition}
\label{pro:CandidateSize} For a choice of $t$ and $d$ as above,
for every $i$, the size of the candidate set $C(i)$ is at most
$52\sqrt{k}$.
\end{proposition}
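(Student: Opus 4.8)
The plan is to split $C(i)$ according to the two kinds of vertices that Step~(2) places into it: the bad vertices $B$, and the vertices whose indegree lies in the window $[i-d,i+d]$. That is, bound $|C(i)|\le |B| + |\{v:\ \mathrm{indeg}(v)\in[i-d,i+d]\}|$ and estimate the two terms separately. The first term is essentially already controlled: the inequality $t^2|B|/(4t+2|B|)\le k$ derived in the proof of Proposition~\ref{pro:badsmall} does not use the particular value $t=\sqrt{6k}$, so applying it with $t=3\sqrt k$ (or, alternatively, observing that enlarging $t$ only shrinks the set of major suspects and hence shrinks $B$) yields $|B|<2\sqrt k$. It remains to bound the number of vertices whose indegree is within $d$ of $i$.

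For that I would use the same ``budget'' fact used before Lemma~\ref{lem:badgood}, namely that since $\pi$ has only $k$ backward arcs, the indegree-test errors satisfy $\sum_v\mathrm{err}(v)\le 2k$, where $\mathrm{err}(v)=|\mathrm{indeg}(v)-\pi(v)|$. Partition the vertices $v$ with $\mathrm{indeg}(v)\in[i-d,i+d]$ according to whether $\mathrm{err}(v)\le d$ or $\mathrm{err}(v)>d$. In the first case $\pi(v)\in[\mathrm{indeg}(v)-d,\mathrm{indeg}(v)+d]\subseteq[i-2d,i+2d]$, and since $\pi$ is a bijection onto $\{0,\dots,n-1\}$, at most $4d+1$ vertices can satisfy this. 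In the second case $v\in D_\pi$, and $\sum_v\mathrm{err}(v)\le 2k$ forces $|D_\pi|\le 2k/d$ (indeed by Lemma~\ref{lem:badgood} these vertices already lie in $B$, so this term is subsumed). Hence $|\{v:\ \mathrm{indeg}(v)\in[i-d,i+d]\}|\le (4d+1)+2k/d$.

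Putting the pieces together, $|C(i)|\le |B| + (4d+1) + 2k/d$. Substituting $d=12\sqrt k$ gives $4d+1=48\sqrt k+1$ and $2k/d=\sqrt k/6$, so with $|B|<2\sqrt k$ and the harmless assumption $k\ge 1$ (whence $1\le\sqrt k$) the right-hand side is at most $2\sqrt k + 48\sqrt k + \sqrt k + \sqrt k/6 \le 52\sqrt k$, as claimed; this, via Proposition~\ref{pro:sets} and the dynamic-programming bound $n2^{C}$, completes the proof of Theorem~\ref{thm:main}.

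The step I expect to be the real content is the bound on $|\{v:\ \mathrm{indeg}(v)\in[i-d,i+d]\}|$: in an arbitrary tournament a width-$2d$ window of indegrees could contain as many as $n$ vertices, so the estimate genuinely relies on the optimality of $\pi$ (equivalently, on the global error budget $\sum_v\mathrm{err}(v)\le 2k$) to confine the low-error vertices to a short interval of positions, with the remaining high-error vertices absorbed into $D_\pi\subseteq B$. A secondary, purely bookkeeping point is that the algorithm runs with $t=3\sqrt k$ rather than the $t=\sqrt{6k}$ of Proposition~\ref{pro:badsmall}, so one must re-derive (or re-use by monotonicity) the bound on $|B|$ for this value of $t$, and then verify that the constants really close below $52\sqrt k$.
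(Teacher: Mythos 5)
Your proof is correct and follows essentially the same route as the paper's: split $C(i)$ into $B$ plus the indegree-window vertices, absorb the high-error window vertices into $B$ via Lemma~\ref{lem:badgood}, and bound the low-error ones by $4d+1$ using the injectivity of $\pi$ on positions. If anything you are slightly more careful than the paper, which invokes Proposition~\ref{pro:badsmall} (stated for $t=\sqrt{6k}$) directly at $t=3\sqrt k$; your re-derivation of $|B|\le (12/7)\sqrt k<2\sqrt k$ from the inequality $t^2|B|/(4t+2|B|)\le k$ (or the monotonicity of $B$ in $t$) closes that small gap, and your arithmetic lands comfortably under $52\sqrt k$.
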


\begin{proof}
The candidate set $C(i)$ contains all of $B$, which by our choice
of $t = 3\sqrt{k}$ and Proposition~\ref{pro:badsmall} contains at
most $3\sqrt{k}$ vertices. In addition it contains those vertices
not in $D_\pi$ whose indegree is between $i - d$ and $i + d$.
There are at most $4d+1$ such vertices (because any such vertex
has to be in a location between $i - 2d$ and $i + 2d$ in $\pi$),
and by our choice of $d = 12\sqrt{k}$ this contributes at most
$48\sqrt{k} + 1$ additional vertices to $C(i)$.
\end{proof}

In summary, the algorithm presented above runs in time
$n^{O(1)}2^{O(\sqrt{k})}$ and finds a feedback arc set of size $k$
in an $n$-vertex tournament, if the tournament has such a feedback
arc set. This proves Theorem~\ref{thm:main}.

\section{Conclusions}

Are there algorithms for $k$-FAST with running times that are
substantially better than $n^{O(1)}2^{O(\sqrt{k})}$? Specifically,
can we extend the range of values of $k$ for which the running
time is polynomial beyond $k = O((\log n)^2)$? If yes, then this
will imply that FAST can be solved in time $2^{o(n)}$ (details
omitted). The NP-hardness results of~\cite{ACN,Alon} do not show
that a running time of $2^{o(n)}$ for FAST is unlikely (e.g., they
do not show that this will imply a similar running time for SAT).
But still, solving FAST in time $2^{o(n)}$ seems to require
substantially new techniques, and hence the author does not
anticipate major improvements over the bounds in the current paper
in the near future.

If major improvements are not to be expected, what about minor
improvements? Here much can be done. Kernelization techniques
(such as in~\cite{DGHNT} and~\cite{ALS}) can offer improvements
that are significant when $k$ is small. For larger values of $k$,
improvements can come from optimizing the values of parameters
(such as of $d$ and $t$) so as to minimize the value of the hidden
constant in the exponent of the $2^{O(\sqrt{k})}$ term. Moreover,
some minor modifications to the algorithm can lead to further
improvements. We have not attempted any of these optimizations in
the current paper, because this surely deserves a separate
publication, to be titled {\em Fastest faster FAST}.

\subsection*{Acknowledgements}

The trigger to this work was a talk given by Noga Alon at the IPAM
Workshop on Probabilistic Techniques and Applications, October
2009.

\end{document}